\def\BibTeX{{\rm B\kern-.05em{\sc i\kern-.025em b}\kern-.08em
	T\kern-.1667em\lower.7ex\hbox{E}\kern-.125emX}}
\newcommand\jz[1]{\begin{pmatrix} #1 \end{pmatrix}}
\newtheorem{theorem}{Theorem}
\newtheorem{remark}{Remark}
\newtheorem{assumption}{Assumption}
\newtheorem{definition}{Definition}
\newtheorem{prop}{Proposition}
\def\subsubsection{%
	\@startsection
	{subsubsection}                 
	{3}                             
	{\parindent}                    
	{3.5ex plus 1.5ex minus 1.5ex}  
	{0.7ex plus .5ex minus 0ex}     
	{\normalfont\normalsize\itshape}
}
\begin{document}

\title{ Identifying the Most Influential Driver Nodes for Pinning Control of Multi-Agent Systems with Time-Varying Topology\\
	\thanks{This work has been submitted to the IEEE for possible publication. Copyright may be transferred without notice, after which this version may no longer be accessible.\\
		G. Zhang, Z. Liu, X. Yu and M. Jalili are with the School of Engineering, RMIT University, Melbourne, Australia. (Email: guangrui.zhang@hotmail.com; S3935750@student.rmit.edu.au; xinghuo.yu@rmit.edu.au; mahdi.jalili@rmit.edu.au;).}
}

\author{\IEEEauthorblockN{Guangrui Zhang, Zhaohui Liu, Xinghuo Yu and Mahdi Jalili}
}

\maketitle

\begin{abstract}
Identifying the most influential driver nodes to guarantee the fastest synchronization speed is a key topic in pinning control of multi-agent systems. This paper develops a methodology to find the most influential pinning nodes under time-varying topologies. First, we provide the pinning control synchronization conditions of multi-agent systems. Second, a method is proposed to identify the best driver nodes that can guarantee the fastest synchronization speed under periodically switched systems. We show that the determination of the best driver nodes is independent of the system matrix under certain conditions. Finally, we develop a method to estimate the switching frequency threshold that can make the selected best driver nodes remain the same as the average system. Numerical simulations reveal the feasibility of these methods.
\end{abstract}

\begin{IEEEkeywords}
multi-agent systems, pinning control, synchronization, most influential pinning node.
\end{IEEEkeywords}

\section{Introduction}
Multi-agent system (MAS) framework has been frequently used to model real-world systems, where nodes represent the entities in systems and edges describe the connection structures between them \cite{herrera2020multi}. Synchronization phenomenon \cite{9272653}, one of the most intriguing collective behaviours that describe how individuals behave in unison, has attracted much attention. The pinning control approach is proposed to control only a fraction of selected nodes to synchronize the whole system to reference state \cite{9524326}. The selected pinning nodes are denoted as driver nodes. Finding the most influential driver nodes is a key challenge that has received attention in the literature \cite{qiu2021identifying, li2022evaluation}. The most intuitive method is selecting the nodes with the highest centrality as driver nodes, which can be evaluated by their degree or betweenness centrality \cite{qiu2021identifying}. Though this method is easy and efficient, it can not guarantee the optimal performance of pinning control \cite{jalili2015optimal, zhou2015enhancing}. Several optimization algorithms have been proposed to find the most influential driver nodes \cite{orouskhani2016optimizing}. However, these algorithms are often computationally expensive especially for large networks. Recently, based on master stability theory, a new method has been proposed to identify the most influential nodes \cite{amani2017finding, amani2018controllability}. By calculating the eigenratio of the augmented Laplacian matrix of the graph, it can find the most influential driver node to improve the pinning control performance with better controllability \cite{zhou2020identifying}. 

Most of the existing pinning control research mentioned above mainly focus on static models with fixed network topological structure. However, practical engineering systems are more complex in reality than in theory \cite{mahela2020comprehensive, drew2021multi, koohi2020review}. Changes in system structure bring challenges to traditional research approaches with fixed topologies. Hence, new analytical methods need to be proposed to solve the problems in more realistic system models, i.e. MASs with time-varying topologies. Although some research works have been recently published to discuss the MASs with switching typologies \cite{chen2020synchronization, zhang2022pinning, 9140034}, finding the best driver nodes in MASs with time-varying topology still remains an unsolved problem.

This paper analyzes the pinning control problems with time-varying topologies. Firstly, different from the existing methods, the synchronization conditions are given under time-varying topologies without using Lyapunov function. To simplify the model, the switching pattern is considered as periodical switching, and then a new method is proposed to find the most influential pinning nodes to guarantee the fastest synchronization speed converging to the desired states. The results indicate that the choice of the most influential driver node is affected by the switching frequency. Hence, we estimate the switching frequency threshold to guarantee that the most influential driver nodes remain the same under average system. We also prove that the determination of the most influential node is independent of the system matrix under certain conditions. The corresponding simulations are provided to illustrate these theories in the following section.

\section{Preliminaries}
For a matrix $M$, $\lambda_{\max}(M)$ denotes the largest real part of its eigenvalues, and $\lambda_{\max}(M)$ denotes the smallest one. $\rho(M) = |\lambda_{\max}(M)|$ denotes the spectral radius of matrix $M$. $\coprod$ is the consecutive left matrix multiplication.

Consider a MAS consisting of $N$ nodes. The directed graph of it is denoted by $ \mathcal{G} $. The topological structure of MAS can be described by the Laplacian matrix $L = \left [ l_{ij} \right ] \in \mathbb{R}^{N\times N}$. 

In this paper, the topology is assumed to be periodically switched with switching signal $\xi(t):[0,+\infty)\rightarrow \left \{ 1, ..., p \right \}$. The switches occur at $t_k \triangleq k\tau$, $\tau>0$, $k\in\mathbb{N}$. Let $ G_{\xi(t)}$ be the time-varying topology of the network and it keeps invariant in each time interval $[t_{k}, t_{k+1})$. The periodical switching topology means there is a switching period $T$ with $G_{\xi(t+T)} = G_{\xi(t)}$ and $T = m\tau$, $m>0$. Moreover, let $L_{\xi(t)}$ be the corresponding Laplacian matrix of $G_{\xi(t)}$.

The average Laplacian matrix keeps constant because of the periodicity of switching signal: $L_{av} = \frac{1}{T}\int_{t}^{t+T} L_{\xi(t)}~dt$.

\begin{assumption}\label{ass:Laplace}
	The topology induced from the average Laplacian matrix contains a spanning tree.
\end{assumption}

\section{Main Results}
\subsection{Problem Formulation}
For a $N$-nodes MAS with continuous-time states, we denote the state of $i$th node by $x_i (t)\in \mathbb{R}^{n}$, $i=1,2,...,N$. The dynamics equation of it can be written as:
\begin{equation}\label{eq1}
	\dot{x}_{i}(t)=Ax_i(t)-r\sum_{j=1,j\neq i}^{N}l_{ij}(t)\Lambda\left ( x_j(t)-x_i(t) \right) + u_i,
\end{equation}
where $A$ is system matrix, $r$ is coupling strength, $\Lambda \in \mathbb{R}^{n \times n}$ is inner coupling matrix and $u_i$ is probable input of $i$th node.

The first aim is to design a proper controller such that the state of each node can approach a desirable synchronization trajectory $c(t)$ when $t\rightarrow \infty $, i.e.
$\lim_{t\rightarrow \infty }\left \| x_i(t)-c(t) \right \|=0,\quad i=1,2,...,N$,
where $\dot{c}(t)=Ac(t)$. The second aim is to identify the most influential pinning node based on the given conditions. The determination of the most influential pinning node is based on evaluating the corresponding synchronization speed. In this paper, the MASs with a single pinning node are analyzed. For MASs with multiple pinning nodes, the proposed method is still applicable. The third aim is to illustrate how the selection of the most influential pinning node can be affected by the switching frequency.

\subsection{The condition for synchronization}
For synchronization at the desired trajectory, we need to pin some nodes and exert control over them: $u_i=-w_i\Lambda \left ( x_i(t)-c(t) \right ),\quad i=1,2,...,m$,
where $w_i$ is the control gain. The error state of $i$th node is denoted by: $e_i(t)=x_i(t)-c(t),\quad i = 1,2,...,m$.

Since we have to consider the influence of switching frequency, let $\bm{e}_T(t) = [e_1(t)^T,...,e_n(t)^T]^T$ be the error state of the whole system under switching period $T$ and one has:
\begin{equation}\label{e:eT}
\begin{aligned}
	\dot{\bm{e}}_T(t) &= (I_N\otimes A-(rL(t)+W)\otimes\Lambda) \bm{e}_T(t),\\
	& \triangleq D_T(t) \bm{e}_T(t).
\end{aligned}
\end{equation}
where $W = \text{diag}(w_1,w_2,...,w_m)$.

Thus, the synchronization of the MAS can be converted to the stability problem of \eqref{e:eT}. Since the topologies are periodically switched with period $T$, one has $D_T(t+T)=D_T(t)$.

Let $R_T(t)$ be the transition matrix of system \eqref{e:eT}:
\begin{equation}\label{e:transition}
	R_T(t) =\int_{0}^{t}\exp \left(D_T(t)\right)dt.
\end{equation}
In particular, it has $R_T(T) = \coprod_{k=0}^{N-1} \exp\left(\tau D_T(t_k) \right)$.

\begin{theorem}
For MAS \eqref{eq1} under pinning control with periodically switched topologies, the synchronization state is asymptotically stable if and only if $\rho (R_T(T)) <1$.
\end{theorem}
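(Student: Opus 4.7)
The plan is to recognize this as a standard Floquet-type result for linear periodic systems and reduce continuous-time asymptotic stability to discrete-time stability of the monodromy matrix $R_T(T)$. The key observation is that because $D_T(\cdot)$ is $T$-periodic, the transition matrix satisfies the semigroup identity $R_T(t+T) = R_T(t)R_T(T)$, and in particular, evaluating along the sampling sequence $t_k = kT$ gives $\bm{e}_T(kT) = R_T(T)^k \bm{e}_T(0)$. Thus the behaviour of $\bm{e}_T$ along integer multiples of the period is entirely governed by powers of the constant matrix $R_T(T)$, while on each interval $[kT,(k+1)T)$ the solution is obtained from $\bm{e}_T(kT)$ by applying the uniformly bounded family $\{R_T(s):s\in[0,T]\}$.

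For sufficiency, I would assume $\rho(R_T(T))<1$. A standard matrix-analytic fact (Gelfand's formula, or the existence of a submultiplicative norm $\|\cdot\|_*$ with $\|R_T(T)\|_*<1$) gives $R_T(T)^k\to 0$ geometrically. Combining this with the bound $\|\bm{e}_T(t)\|\leq \bigl(\sup_{s\in[0,T]}\|R_T(s)\|\bigr)\,\|R_T(T)^k\|\,\|\bm{e}_T(0)\|$ valid for $t\in[kT,(k+1)T)$, I conclude $\bm{e}_T(t)\to 0$ for every initial condition, which is exactly asymptotic stability of the synchronization manifold.

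For necessity, I would argue by contraposition: if $\rho(R_T(T))\geq 1$, choose an eigenvector $v$ of $R_T(T)$ corresponding to an eigenvalue $\mu$ with $|\mu|\geq 1$, and set $\bm{e}_T(0)=v$. Then $\bm{e}_T(kT)=\mu^k v$ does not converge to zero, so the error system is not asymptotically stable. Equivalently, if the continuous system is asymptotically stable then $R_T(T)^k \bm{e}_T(0)\to 0$ for every $\bm{e}_T(0)\in\mathbb{R}^{nN}$, forcing $R_T(T)^k\to 0$ and hence $\rho(R_T(T))<1$.

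The main conceptual obstacle is simply making the Floquet reduction rigorous given the mildly ambiguous display~\eqref{e:transition}: one should first verify that $R_T(t)$ defined as the fundamental matrix solution of $\dot{R}_T(t)=D_T(t)R_T(t)$, $R_T(0)=I$, coincides on $[0,T]$ with the piecewise product $\prod_{k}\exp(\tau D_T(t_k))$ stated after \eqref{e:transition}, because $D_T$ is piecewise constant on the subintervals $[t_k,t_{k+1})$. Once this identification is in place, the periodicity $D_T(t+T)=D_T(t)$ yields the semigroup relation at multiples of $T$, and the remainder of the argument is a routine spectral-radius computation, with no appeal to a Lyapunov function required.
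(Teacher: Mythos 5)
Your proposal is correct and follows the same overall route as the paper: reduce the periodic continuous-time system to powers of the monodromy matrix $R_T(T)$, control the behaviour within each period by a uniform bound, and invoke Gelfand's formula $\lim_m\|R_T(T)^m\|^{1/m}=\rho(R_T(T))$ to tie convergence of the powers to the spectral radius. The one place you genuinely diverge is the necessity direction. The paper argues it via a two-sided squeeze, claiming $\|\bm{e}_T(t)\|\geqslant e_0\exp(d_{\min}T)\|R_T(T)^m\|$; as written this lower bound is not valid for an arbitrary fixed initial vector, since $\bm{e}_T(t_1)$ may lie in a subspace on which $R_T(T)^m$ contracts much faster than its norm indicates. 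Your contrapositive argument --- pick an eigenvector $v$ of $R_T(T)$ with eigenvalue $|\mu|\geqslant 1$ and observe that $\bm{e}_T(kT)=\mu^k v$ does not vanish --- correctly exploits the fact that asymptotic stability must hold for \emph{every} initial condition, and is the cleaner and more rigorous way to close that direction. Your remark about first verifying that the fundamental solution of $\dot R_T=D_T R_T$, $R_T(0)=I$, agrees with the piecewise product $\coprod_k\exp(\tau D_T(t_k))$ is also well taken, since the paper's display \eqref{e:transition} writes $R_T$ as an integral of matrix exponentials rather than as the fundamental matrix, and the product formula is what the proof actually uses.
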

\begin{proof}
Consider any moment $t = mT+t_1$ where $m\in\mathbb{N}$, $t_1\in[0,T)$ and the error state satisfies
\begin{equation}
	\bm{e}_T(t) = \bm{e}_T(mT+t_1)=R_T(T)^m\bm{e}_T(t_1).
\end{equation}
Then
\begin{equation}\label{e:thm1:norm_estimation}
\begin{aligned}
		\lim_{t\rightarrow \infty}\|\bm{e}_T(t)\| \leqslant \lim_{m\rightarrow \infty}e_0\exp(d_{\max}T)\|R_T(T)^m\|,\\
		\lim_{t\rightarrow \infty}\|\bm{e}_T(t)\| \geqslant \lim_{m\rightarrow \infty}e_0\exp(d_{\min}T)\|R_T(T)^m\|,\\
\end{aligned}
\end{equation}
where $d_{\max} = \max\limits_{k = 1,...,p} \|D_T(t_k)\|$, $d_{\max} = \min\limits_{k = 1,...,p} \lambda_{\min} \left(D_T(t_k)\right)$ and $e_0 \triangleq \|\bm{e}_T(0)\|$ is the norm of initial state independent from period.

Based on the Gelfand's formula \cite{schwartz1963linear}
\begin{equation}\label{e:thm1:norm_estimation2}
	\lim_{m\rightarrow \infty }\|R_T(T)^m\|^{1/m} = \rho(R_T(T)).
\end{equation}
	
Thus $\|R_T(T)^m\|$ decreases to zero exponentially if and only if $\rho(R_T(T))<1$. Obviously, ${\left \|\bm{e}_T(t) \right \|}^{2}$ converges to zero exponentially as well, which means that the synchronization state is asymptotically stable.
%
%
%
%
%
%
%
%
%
\end{proof}

\subsection{The most influential driver node}
Given limited resources, it is necessary to optimize the selection of pinning nodes in order to achieve better control effect. The definition of control effect varies, and this paper measures it by the synchronization speed of the network.

\begin{definition}\label{def:converge_speed}
	Consider any system $\dot{x}(t) = f(x(t))$ and following limitation exists and is not equal to 0:
	 \begin{equation}
	 	\lim_{t\rightarrow \infty} \dfrac{\ln \|x(t)\|}{t} = b.
	 \end{equation}
	 We call $b$ the convergence speed of the system.
	 
	 In particular, the synchronization speed of a MAS \eqref{eq1} is the  convergence speed of the error system \eqref{e:eT}.
\end{definition}

Let node $i$ be the single pinning node in the MAS. Then $D_T$ in \eqref{e:eT} becomes $D_{T,i}$ defined as
\begin{equation}
	D_{T,i}(t) = I_N\otimes A - \left ( rL(t) + W_i \right ) \otimes\Lambda
\end{equation}
where all entries of $W_i\in\mathbb{R}^{n\times n}$ are zero except the $i$th element on the diagonal which is $w_i>0$.
Denote $R_{T,i}$ the transition matrix induced from $D_{T,i}$ during a single period as \eqref{e:transition} and denote $\bm{e}_{T,i}$ corresponding error distance between agent trajectories and desired state.

Based on \eqref{e:thm1:norm_estimation2}, the synchronization speed of $\|\bm{e}_{T,i}(t)\|$ has

\begin{equation*}\label{e:synchronization_speed}
\begin{aligned}
	\lim_{t\rightarrow\infty}\dfrac{\ln (\|\bm{e}_{T,i}(t)\|)}{t}\leqslant \lim_{m\rightarrow\infty}\dfrac{\ln \left(e_0\exp(d_{\max}T)\|R_{T,i}(T)^m\|\right)}{mT}\\
	\leqslant \dfrac{\ln \left(\lim\limits_{m\rightarrow\infty}\|R_{T,i}(T)^m\|^{1/m}\right)}{T}
=\dfrac{1}{T}\ln \rho(R_{T,i}(T))\triangleq b_{T,i} 
\end{aligned}
\end{equation*}
and 
\begin{equation*}
	\begin{aligned}
		\lim_{t\rightarrow\infty}\dfrac{\ln (\|\bm{e}_{T,i}(t)\|)}{t}&\geqslant \lim_{m\rightarrow\infty}\dfrac{\ln \left(e_0\exp(d_{\min}T)\|R_{T,i}(T)^m\|\right)}{mT}\\
		&= \dfrac{1}{T}\ln \rho(R_{T,i}(T)) = b_{T,i} 
	\end{aligned}
\end{equation*}

Squeeze theorem \cite{sohrab2003basic} indicates that $b_{T,i}$ is the speed of synchronization with switched period $T$. Notice that $b_{T,i}<0$ means $\rho(R_{T,i}(T))\in[0,1)$ and the system can reach synchronization; $b_{T,i}\geqslant 0$ means $\rho(R_{T,i}(T))\geqslant 1$ which indicates the error among agents fails to converge to zero, and the system can not be synchronized with pinning node $i$.

\begin{definition}
	Let $w_i = w$, $i = 1,...,N$. We say node $i$ is the most influential node at pinning strength $w$ if $b_{T,i}\geqslant b_{T,j}$, $j = 1,...,N$, $j\ne i$.
\end{definition}

For the switching phenomenon in \eqref{eq1}, scaling the frequency may affect the synchronization speed, which can further affect the selection of the most influential node. An example is provided in simulation A to illustrate this statement.

\begin{prop}\label{thm:no_influence}
	The determination of the most influential pinning node is independent of the system matrix $A$ if the coupling matrix $\Lambda$ is an identity matrix.
\end{prop}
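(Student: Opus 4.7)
The plan is to show that when $\Lambda = I_n$, the transition matrix $R_{T,i}(T)$ factors as a Kronecker product in which one factor depends on $A$ but not on $i$, while the other factor depends on $i$ but not on $A$. Since $b_{T,i} = \frac{1}{T}\ln\rho(R_{T,i}(T))$, multiplicativity of spectral radius under Kronecker products then separates the $A$-contribution as a constant shift common to every node, leaving the ranking of nodes $\{b_{T,i}\}$ independent of $A$.

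First I would substitute $\Lambda = I_n$ into the definition to obtain $D_{T,i}(t_k) = I_N \otimes A - (rL(t_k)+W_i)\otimes I_n$. The key algebraic observation is that the two summands commute, because $(I_N \otimes A)\bigl(M\otimes I_n\bigr) = M\otimes A = (M\otimes I_n)(I_N\otimes A)$ for any $M$. Since they commute, the matrix exponential factors:
\begin{equation*}
\exp\bigl(\tau D_{T,i}(t_k)\bigr) = \bigl(I_N\otimes e^{\tau A}\bigr)\cdot \Bigl(e^{-\tau(rL(t_k)+W_i)}\otimes I_n\Bigr).
\end{equation*}
Moreover, $I_N\otimes e^{\tau A}$ commutes with every matrix of the form $M\otimes I_n$, so it can be pulled freely through the product $R_{T,i}(T) = \coprod_{k=0}^{m-1}\exp\bigl(\tau D_{T,i}(t_k)\bigr)$.

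Next I would collect the $m$ copies of $I_N\otimes e^{\tau A}$ on one side and use the mixed-product property repeatedly on the remaining factors. Defining
\begin{equation*}
M_i \triangleq \coprod_{k=0}^{m-1} e^{-\tau(rL(t_k)+W_i)},
\end{equation*}
the resulting expression is $R_{T,i}(T) = \bigl(I_N\otimes e^{TA}\bigr)\bigl(M_i\otimes I_n\bigr) = M_i \otimes e^{TA}$. Note $M_i$ depends on the topology, the pinning location, and $w$, but not on $A$.

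Finally I would invoke the identity $\rho(X\otimes Y) = \rho(X)\rho(Y)$ to write $\rho(R_{T,i}(T)) = \rho(M_i)\,\rho(e^{TA})$, hence
\begin{equation*}
b_{T,i} = \tfrac{1}{T}\ln\rho(M_i) + \tfrac{1}{T}\ln\rho(e^{TA}).
\end{equation*}
The second term is identical for every candidate pinning node $i$, so the ordering $b_{T,i}\ge b_{T,j}$ reduces to $\rho(M_i)\ge\rho(M_j)$, a quantity in which $A$ does not appear. This yields the claim. I do not anticipate a serious obstacle: the only delicate point is justifying the reshuffling of factors inside the non-commutative product $\coprod_k$, but this is legitimate precisely because $I_N\otimes e^{\tau A}$ commutes with each $e^{-\tau(rL(t_k)+W_i)}\otimes I_n$.
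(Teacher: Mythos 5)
Your proposal is correct and follows essentially the same route as the paper: exploit the commutativity of $I_N\otimes A$ with $(rL(t_k)+W_i)\otimes I_n$ to factor the $A$-dependence out of $R_{T,i}(T)$, then use multiplicativity of the spectral radius to see that $A$ contributes only a node-independent shift to $b_{T,i}$. In fact your bookkeeping $R_{T,i}(T)=M_i\otimes e^{TA}$ and $\rho(R_{T,i}(T))=\rho(M_i)\,\rho(e^{TA})$ is the cleaner, correct form of what the paper writes somewhat loosely as $\rho(R_{T,i}(T))=T\rho(Q_i)\rho(A)$.
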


\begin{proof}
Proposition \ref{thm:no_influence} needs to prove $\rho(R_{T,i}(T))$, $i = 1,...,N$ proportional change with different $A$.
	
Notice that
\begin{equation*}
\begin{aligned}
	\text{exp}(D_{T,i}(t)) &= \text{exp}\left(  I_N\otimes A - \left ( rL(t)+W_i \right ) \otimes\Lambda \right)\\
	&= \text{exp}\left(  I_N\otimes A \right)\text{exp}  \left(-\left ( rL(t)+W_i \right ) \otimes\Lambda \right)\\
	&=\text{exp} \left(-\left ( rL(t)+W_i \right ) \otimes\Lambda \right) \text{exp}\left(  I_N\otimes A \right)  
\end{aligned}
\end{equation*}
Then we have 	
\begin{equation}
\begin{aligned}
	R_{T,i}(T)&=  \left[\coprod_{k = 0}^{M-1} \text{exp}\left(-(r L(t_k) + W_i)  \otimes \Lambda\right)\right]  \text{exp}\left( I_N\otimes T A \right )\\
	&=  \left[\coprod_{k = 0}^{M-1} \text{exp}\left(r L(t_k) + W_i \right)^{-1}\right]  \otimes  T A
\end{aligned}
\end{equation}
Let $\text{eig}(\cdot)$ be a vector formed by all eigenvalues of a matrix and we have:
\begin{equation}
	\text{eig}(R_{T,i}(T))=  T\text{eig}(Q_i)\otimes \text{eig}(A) 
\end{equation}
where
\begin{equation}
	Q_i = \coprod_{k = 0}^{M-1} \text{exp}\left(r L(t_k) + W_i \right)^{-1},
\end{equation}
and
\begin{equation}
	\rho(R_{T,i}(T)) = T\rho(Q_i)\rho(A).
\end{equation}
Then the influence of $A$ is applied equally to each eigenvalue of $R_{T,i}(T)$. System matrix $A$ does not affect the selection of most influential point.
\end{proof}

%
%
%

One of the most common methods to analyze the periodically switched MASs is to simplify the switching topology as a fixed average graph and establish the average system by assuming the switching frequency is fast\cite{kim2013consensus,stilwell2006sufficient}. Thus, in this research, we estimate the switching frequency threshold that can make the selected best driver nodes remain the same as the average system. Without calculating the transition matrix, it is difficult to obtain the synchronization speed accurately. But we can still propose a method to choose the best pinning node and optimize the efficiency of control.

The average matrix of the error system where only node $i$ is pinnedv is defined as:
\begin{equation}
	\bar{D}_i = I_N\otimes A - \left ( rL_{av} + W_i \right ) \otimes\Lambda.
\end{equation}

Note that $\lim\limits_{T\rightarrow 0} \dfrac{R_{T,i}(T)}{T} = \bar{D}_i$. Hence, when the period shrinks to 0, \eqref{e:eT} becomes the average system: $\dot{\bm{e}} = \bar{D}_{T,i} \bm{e}$.

Similar to definition \ref{def:converge_speed}, we can calculate the convergence speed of the average system $\bar{b}_i= \max\text{Re}\lambda(\bar{D}_i)$ which means the largest real part of eigenvalue of $D_i$. Then we can obtain the corresponding most influential node.

Consider the switching period as a varying factor and then we have following theorem:
\begin{theorem}\label{thm:3}
	If node $i_0$ is the most influential node of average system, then there exists a threshold $T_0$ such that the node remains so at $T<T_0$.
\end{theorem}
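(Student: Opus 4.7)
The plan is to prove this by a continuity argument. The quantity that decides influence is $b_{T,i} = \frac{1}{T}\ln\rho(R_{T,i}(T))$, while in the average system it is $\bar{b}_i = \max\text{Re}\,\lambda(\bar{D}_i)$. If I can show that $b_{T,i}$ extends continuously to $T=0$ with value $\bar{b}_i$ for each fixed node $i$, then, because there are only finitely many nodes, strict optimality of $i_0$ at $T=0$ propagates to an open neighborhood $T\in[0,T_0)$ by a standard min-gap argument.

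The key technical step is to establish $\lim_{T\to 0^+} b_{T,i}=\bar{b}_i$. Writing $\tau=T/M$ and Taylor-expanding each exponential in the product form of $R_{T,i}(T)=\prod_{k=0}^{M-1}\exp(\tau D_{T,i}(t_k))$, I expect
\begin{equation*}
R_{T,i}(T)=I+\tau\sum_{k=0}^{M-1}D_{T,i}(t_k)+O(\tau^{2}M^{2})=I+T\bar{D}_i+O(T^{2}),
\end{equation*}
using that $\tau\sum_k D_{T,i}(t_k)$ is exactly the Riemann sum defining $T\bar{D}_i$ together with $\tau^2M^2=T^2$. Passing this to spectra, continuity of the eigenvalue map lets me write each eigenvalue of $R_{T,i}(T)$ as $1+T\mu_j(T)$ with $\mu_j(T)\to\mu_j\in\sigma(\bar{D}_i)$. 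A second-order expansion of the modulus gives $|1+T\mu_j(T)|=1+T\,\text{Re}(\mu_j)+O(T^{2})$; maximizing over $j$ yields $\rho(R_{T,i}(T))=1+T\bar{b}_i+O(T^{2})$, and therefore $b_{T,i}=\bar{b}_i+O(T)$.

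With this in hand, the closing continuity/finiteness argument is short. Since $i_0$ is strictly the most influential in the average system, the gap $\delta=\min_{j\neq i_0}|\bar{b}_{i_0}-\bar{b}_j|>0$. Continuity of each $b_{T,\cdot}$ at $T=0$ lets me pick, for every $j\neq i_0$, a threshold $T_0^{(j)}>0$ ensuring $|b_{T,j}-\bar{b}_j|<\delta/3$ and $|b_{T,i_0}-\bar{b}_{i_0}|<\delta/3$ for $T<T_0^{(j)}$. Then $T_0=\min_{j\neq i_0}T_0^{(j)}$ preserves the ordering between $b_{T,i_0}$ and every competitor, which is the claim.

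The step I expect to be the main obstacle is making the spectral expansion fully rigorous in the presence of possibly repeated eigenvalues of $\bar{D}_i$, where individual eigenvalue branches need not be $C^1$ in $T$. I would sidestep this by never tracking branches individually: both $\rho(M)$ and $\max_j|1+T\mu_j|$ are continuous functions of the matrix entries of $M$, so the estimate $\rho(R_{T,i}(T))=1+T\bar{b}_i+o(T)$ can be obtained directly via the Hausdorff continuity of the spectrum and the expansion of $R_{T,i}(T)$ above, without appealing to smooth eigenvalue perturbation. A secondary caveat worth flagging in the statement is that the conclusion silently needs strict optimality of $i_0$ in the average system; without it, the best one can assert is that $i_0$ remains within $O(T)$ of optimal, not literally optimal.
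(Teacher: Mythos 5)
Your proposal is correct, but it takes a genuinely different route from the paper. You prove continuity of the rate itself: expanding the monodromy matrix as $R_{T,i}(T)=I+T\bar{D}_i+O(T^2)$ and passing to the spectrum, you get $\rho(R_{T,i}(T))=1+T\bar{b}_i+O(T^2)$, hence $b_{T,i}=\bar{b}_i+O(T)$, and then close with a finite min-gap argument over the nodes. The paper instead never touches the spectral radius of $R_{T,i}(T)$ in this proof: it bounds the deviation of the one-period error \emph{state} from the average system's state, $\|g_{T,i}(T)\|\leqslant 2(e^{d_iT}-1-d_iT)\|\bm{e}(0)\|$, sandwiches $\|\bm{e}_{T,i}(T)\|$ between explicit envelopes $\overline{e}_{i_0}(T)$ and $\underline{e}_{j}(T)$, and locates $T_0$ as the first crossing of these envelopes using their values and derivatives at $T=0$. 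Each approach buys something: the paper's yields a concretely computable threshold (this is exactly what is evaluated numerically as $T_0=4.5\times10^{-4}$ in the simulation section), whereas your argument is purely existential since the $O(T)$ constants are not made explicit. On the other hand, your argument compares precisely the quantities $b_{T,i}$ that \emph{define} the most influential node, while the paper's compares norms of one-period error states for a fixed initial condition, which is not literally the same criterion as ordering the spectral radii $\rho(R_{T,i}(T))$; in that respect your route is the tighter fit to Definition 2. Your two flagged caveats are both apt: the repeated-eigenvalue issue is genuinely the delicate step and your workaround via continuity of the spectrum (applied to $(R_{T,i}(T)-I)/T\to\bar{D}_i$) rather than smooth eigenvalue branches is the right fix, and the paper does indeed silently invoke strict optimality ($\bar{b}_{i_0}<\bar{b}_j$ for all $j\neq i_0$) exactly as you predicted.
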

\begin{proof}

First we estimate the error between the periodically switched system and the average system:
\begin{equation}\label{e:thm2:e_Ti}
\begin{aligned}
\bm{e}_{T,i}(T) &= \coprod_{k = 0}^{M-1}\exp(D_{T,i}(k\tau)\tau)\bm{e}(0) \\
&= \coprod_{k=0}^{M-1}\left(\sum_{p=0}^\infty \frac{D_{T,i}(k\tau)^p \tau^p}{p!}\right) \bm{e}(0)
\end{aligned}
\end{equation}
and
\begin{equation}\label{e:thm2:e}
	\begin{aligned}
		\bm{e}(T) = \exp(\bar{D}_iT) \bm{e}(0) 
		= \sum_{p=0}^\infty\left(\frac{(\sum_{k=0}^{M-1}D_{T,i}(k\tau))^p \tau^p}{p!}\right) \bm{e}(0).
	\end{aligned}
\end{equation}

Let $g_{T,i}(t) \triangleq \bm{e}_{T,i}(t) - \bm{e}(T) $ be the error between two systems. Note that the sum and product of several convergent series are still convergent. Then $g_{T,i}(t)$ can be expressed as:
\begin{equation*}
\begin{aligned}
		&g_{T,i}(t) = \left(1 + \sum_{k=0}^{M-1}D_{T_i}(k\tau)\tau + O(\tau^2) \right)e(0)\\
		&-\left(1 + \sum_{k=0}^{M-1}D_{T_i}(k\tau)\tau + \sum_{p=2}^\infty\frac{(\sum_{k=0}^{M-1}D_{T,i}(k\tau))^p \tau^p}{p!} \right)e(0)
\end{aligned}
\end{equation*}
where $O(\tau^2)$ represents a polynomial of matrix with orders same or higher than $\tau^2$.

Since $d_i \geqslant \|D_{T,i}(k\tau)\|$, $k = 1,...,p$, we have 
\begin{equation*}
	\begin{aligned}
		&\|g_{T,i}(t)\|\leqslant\left( \prod_{k=0}^{M-1}\left(\sum_{p=0}^\infty \frac{d_i^p \tau^p}{p!}\right)- 1 - d_iT \right)\|\bm{e}(0)\| \\
		&~~~~~~~~~~~~~~~~~~~~~~~~~~~~~+  \left(e^{d_i T}- 1 - d_iT\right) \|\bm{e}(0)\|\\
		&=\left( \left(e^{d_i\tau}\right)^M- 1 - d_iT \right)\|\bm{e}(0)\|+  \left(e^{d_i T}- 1 - d_iT\right) \|\bm{e}(0)\|
	\end{aligned}
\end{equation*}


Then the norm of $g_{T,i}(T)$ can be estimated by 
\begin{equation}\label{e:g_Ti}
	\|g_{T,i}(T)\|\leqslant 2(e^{d_iT}-1-d_iT)\|\bm{e}(0)\|.
\end{equation}

Since $i_0$ is the most influential driver node of the average system, it has $b_{i_0}< b_j$, $\forall j\ne i_0$ and 
\begin{equation*}
\begin{aligned}
	\|e_{T,i_0}(T)\|&\leqslant \left(e^{\bar{b}_{i_0}T} + 2(e^{d_{i_0}T}-1-d_{i_0}T)\right)\|\bm{e}(0)\|\triangleq \overline{e}_{i_0}(T).
\end{aligned}
\end{equation*}
\eqref{e:g_Ti} indicates
\begin{equation*}
	\|e_{T,j}(T)\|\geqslant \left(e^{\bar{b}_jT} -2(e^{d_jT}-1-d_jT) \right)\|\bm{e}(0)\|\triangleq \underline{e}_{j}(T).
\end{equation*}

Notice that $\overline{e}_{i_0}(0) = \underline{e}_{j}(0)$, $\overline{e}_{i_0}'(0) < \underline{e}_{j}'(0)$, $\overline{e}_{i_0}''(T) > \underline{e}_{j}''(T)$, then there exists $T_0>0$ such that 
\begin{equation}\label{e:thm2:threshold}
	\overline{e}_{i_0}(T_0) = \min_{j\ne i}\underline{e}_{j}(T_0).
\end{equation}
Therefore, for any $0<T<T_0$, $\|e_{T,i_0}(T)\|<\overline{e}_{i_0}(T)<\underline{e}_{j}(T)<\|e_{T,j}(T)\|$ and $i_0$ is still the most influential node.
\end{proof}
\begin{remark}
	If the coupling matrix $\Lambda$ is an identity matrix, $\bar{b}_i$ and $d_i$ in the threshold \eqref{e:thm2:threshold} can be redefined as
	\begin{equation}
		\bar{b}_i = \rho\left(rL_{av}+W_i\right),\quad d_i = \max_{k = 1,...,p} 
		\left\|rL(k\tau) + W_i \right\|.
	\end{equation}

	It can be obtained by assuming $A$ is an empty matrix and Proposition \ref{thm:no_influence} indicates that this assumption has no effect on the calculation of synchronization speed.

\end{remark}
\section{SIMULATION EXAMPLES}
This simulation example applies the pinning control to a four-nodes system with states $x_i (t) \in \mathbb{R}^2$, $i = 1,2,3,4$. The topology of switching network is shown in Fig. \ref{fig:temporal dilation}.
\begin{figure}[htbp]
	\centering
	\includegraphics[trim = 21 27 21 22, clip, width=0.9\linewidth]{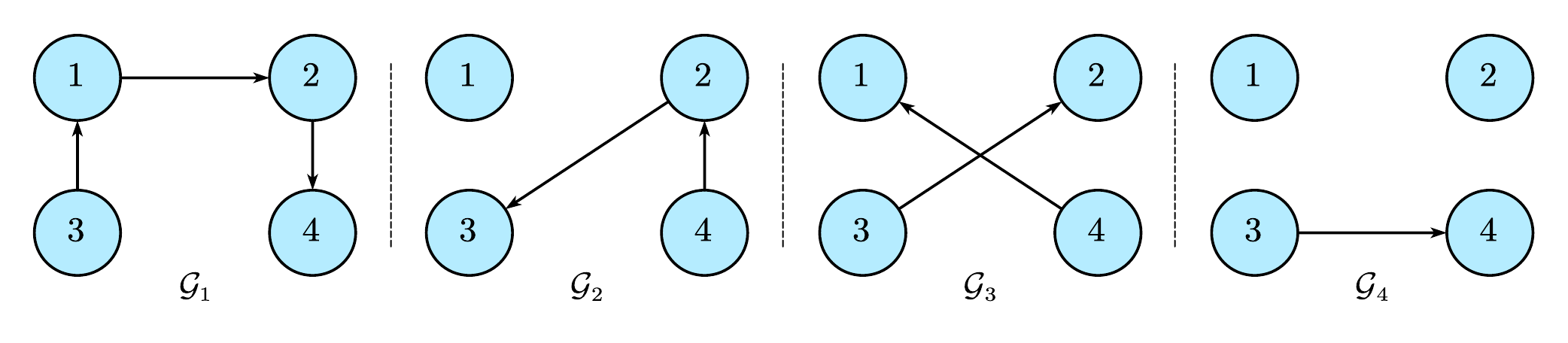}
	\caption{Switching topology of the network}
	\label{fig:temporal dilation}
\end{figure}

The switching signal is $\xi(t) = i, \quad t\in[kT+iT/4, kT+(i+1)T/4)$, where index $i = 1,2,3,4$ and $k \in\mathbb{N}$. The coupling strength $r=1$, and control gain $w_i=5$. The dynamics of desirable trajectory is $\dot{c}(t)=Ac(t)$ with initial state $c(0)=[1,1]^T$. The initial states of four nodes are $x_1(0)=[-2,5]^T$, $x_2(0)=[-2,1]^T$, $x_3(0)=[-4,1]^T$, $x_4(0)=[-3,2]^T$. In this simulation, only one node is selected to be pinned in each case.

\subsection{Synchronization}
\subsubsection{Synchronization results and speed analysis}

We first demonstrate the synchronization of a MAS with periodically switching topologies based on Theorem 1. In system 1, the inner coupling matrix $\Lambda=I_2$, and the system parameter is denoted as $A_1$,
\begin{equation*}
	A_1 = \jz{0.320& 0.304\\-0.544&0.112}.
\end{equation*}

The relationship between synchronization speed $b_i$ and switching period $T$ of system 1 is shown in Fig. \ref{fig:synchronization_speed_system1}. It shows that when switching period $T = 6$ (shown by the dashed line), $b_1>0$ and $b_2<b_4<b_3<0$. Based on Theorem 1, this result indicates that system 1 can not achieve synchronization when node 1 is pinned, but it can be synchronized in the other three pinning node cases. The results in Fig. \ref{fig:synchronization error} confirm this conclusion. In Fig. \ref{fig:synchronization error}, the error states between the desirable trajectory and the system states are provided. It can be observed from the first case of Fig. \ref{fig:synchronization error} that the error states diverge with time when node 1 is pinned, which indicates the system can not be synchronized. In the other three cases, the error states converge to zero when node 2, 3 and 4 are pinned, which means the system achieves synchronization.

The synchronization speed can be revealed from Fig. \ref{fig:synchronization error}. The converging speed of error states is the fastest when node 2 is pinned. The second fastest case is choosing node 4 as the pinning node. The slowest one is the case with pinning node 3. This phenomenon is consistent with the results $b_2<b_4<b_3<0$ shown before from Fig. \ref{fig:synchronization_speed_system1}. To obtain the fastest synchronization speed, the case with smallest $b_i$ is identified as the most influential driver node. Hence, node 2 is the best driver node for system 1 when switching period $T = 6$.
\begin{figure}[htbp]
	\centering
	\includegraphics[trim = 100 420 100 280, clip, width=\linewidth]{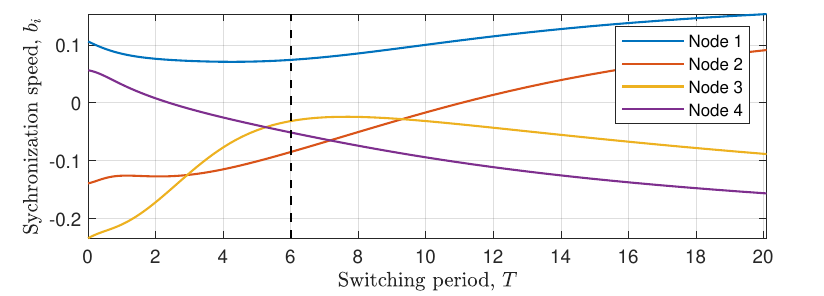}
	\caption{Synchronization speed of system 1}
	\label{fig:synchronization_speed_system1}
\end{figure}
\begin{figure}[htbp]
	\centering
	\includegraphics[trim = 100 270 100 270, clip, width=\linewidth]{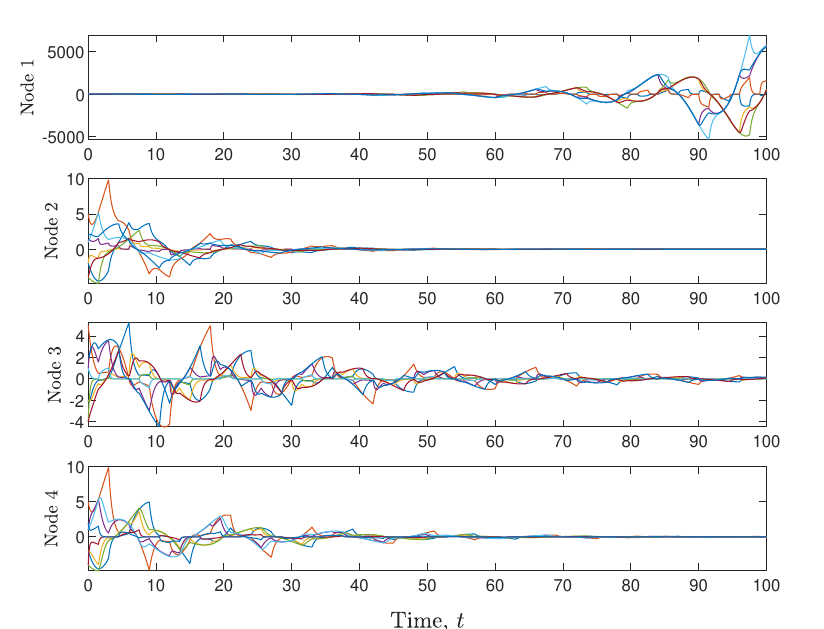}
	\caption{Synchronization error}
	\label{fig:synchronization error}
\end{figure}

\subsubsection{Independence of system matrix $A$}
Simulation of system 2 is generated with the same parameter settings as system 1 except system matrix $A$ defined in \eqref{eq1}. The system matrix of system 2 is denoted as $A_2$,
\begin{equation*}
	A_2 = \jz{0.500&0.400\\-0.200&0.300}.
\end{equation*}

This simulation shows the synchronization speed $b_i$ with varying switching period $T$ of system 2 in Fig. \ref{fig:synchronization_speed_system2}. By comparing the results between these two systems, it can be found that the trajectories of four pinning node cases are only scaled  from Fig. \ref{fig:synchronization_speed_system1} to Fig. \ref{fig:synchronization_speed_system2}, and the relative magnitude relationship between them remains consistent throughout. Hence, changing system parameter does not affect the identification of the most influential driver node, which verifies the Proposition 1.

\begin{figure}[htbp]
	\centering
	\includegraphics[trim = 100 420 100 280, clip, width=\linewidth]{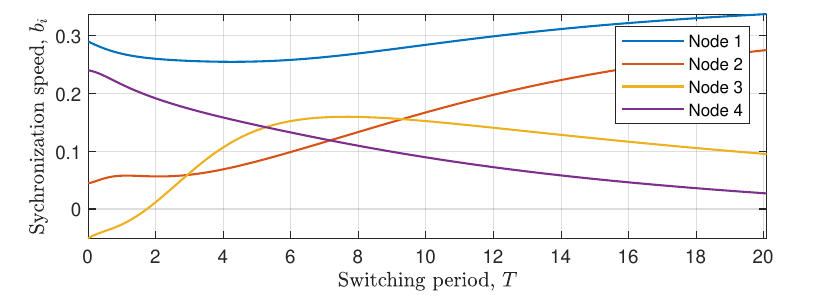}
	\caption{Synchronization speed of system 2}
	\label{fig:synchronization_speed_system2}
\end{figure}

\subsection{Switching period threshold $T_0$}
In this simulation example, the system parameter is set to be $A=A_1$, and the inner coupling matrix is
\begin{equation*}
	\Lambda =\jz{1.6&-1.0\\-0.6&0.8}.
\end{equation*}

As is shown in Fig. \ref{fig:threshold}, pinning node 3 leads to the fastest synchronization speed in average system when switching period $T=0$. Before the first bifurcation point $T=1.57$, the most influential pinning node remains as node 3.
\begin{figure}[htbp]
	\centering
	\includegraphics[trim = 100 420 100 280, clip, width=\linewidth]{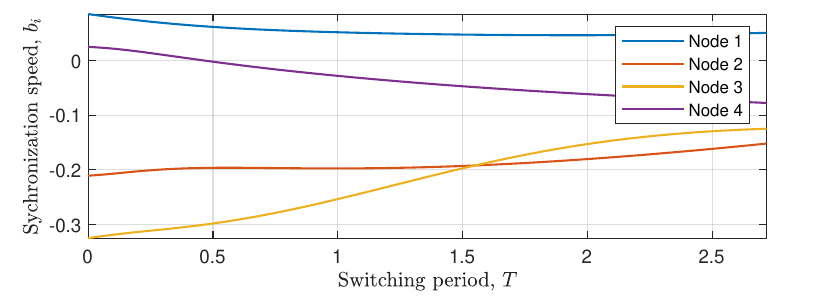}
	\caption{Real switching period threshold}
	\label{fig:threshold}
\end{figure}

According to Theorem \ref{thm:3}, the estimated threshold satisfies for $j = 1,2,4$:
\begin{equation}
	e^{\bar{b}_3T_0} + 2(e^{d_3T_0}-1-d_3T_0)\leqslant e^{\bar{b}_jT} -2(e^{d_jT}-1-d_jT)
\end{equation}

%

By numerical calculation, the estimated threshold is located at $T_0=4.5\times 10^{-4}$, which is smaller than the real bifurcation point $T=1.57$ and  verifies Theorem \ref{thm:3} therefore.

%

\section{CONCLUSION}

Pinning control of MASs has many potential applications in various engineering fields. One of the key challenges is identifying the most influential driver nodes of pinning control, especially in the systems with the time-varying topology. By analyzing the state transition matrix, we have provided the pinning synchronization conditions of MASs with periodically switched topologies. A method has been proposed to find the most influential driver nodes that can guarantee the fastest synchronization speed. This paper also analyzes the impact factors of identifying the best driver nodes, including the topology switching frequency and system parameter matrix. Theoretical proof and simulation have indicated that the determination of the most influential driver nodes is independent of the system parameter matrix when the inner coupling matrix is an identity matrix. Furthermore, an estimation method has been developed to find the switching frequency threshold that can keep the best driver node identified in the average system unchanged.

\bibliographystyle{IEEEtran}
\bibliography{Mybib}

\end{document}